\newcommand{\tuple}[1]{\langle #1 \rangle}
\newcommand{\prob}[2]{\mathrm{Prob}(#1\ |\ #2)}
\newcommand{\realworld}[0]{\omega^{\star}}
\newtcolorbox{reduction}{breakable,title={Reduction},colback=cyan!5!white}
\newtheorem{defi}{Definition}
\newtheorem{coro}{Corollary}
\newtheorem{lemma}[coro]{Lemma}
\newtheorem{theo}[coro]{Theorem}
\newenvironment{proof}{\itshape}{\hspace{\stretch{1}}QED}
\begin{document}
\title{Complexity Results of Persuasion}
\author{Alban Grastien}
\maketitle

Persuasion is a problem introduced by Wojtowicz \cite{wojtowicz::aies::24}.
It consists in finding a subset of facts that improves
the a-posteriori probability of a specific fact over a given threshold.
It has applications in ``law, advertising, politics, science, public relations, etc''.
We refer the reader to the original paper for a discussion about the benefits of this problem.

The original paper claims that the problem is \textsc{np-complete} and presents a proof.
We claim that that proof was incorrect;
in particular, they make a confusion between the empty conjunction of a collection of sets
and the pairwise empty conjunction.
We prove that persuasion is in \textsc{p} when the persuasion threshold is $1$.
As the reduction proposed by Wojtowicz uses with value for threshold,
this proves the incorrectness of their reduction
(we do not believe that \textsc{p = np} is proved by our simple computations).
We propose a new proof based, as in the original paper, on \textsc{exact cover}
and show \textsc{np-hard}ness in the general case.

\section{Notation and Problem Definition}

% Persuasion is defined as follows.

Given a set $S = \{S_1,\dots,S_k\}$ of sets,
we use $\bigcap S$ to denote $S_1 \cap \dots \cap S_k$
and $\bigcup S$ to denote $S_1 \cup \dots \cup S_k$.

A \emph{probability space} is a triple $P = \tuple{\Omega, F, \pi}$
where $\Omega$ is a finite set of \emph{worlds},
$F \subseteq 2^\Omega$ is a set of \emph{events}
(each event is a subset of worlds),
and $\pi: \Omega \mapsto [0,1]$ is a probability function over $\Omega$.
The natural extension of probability function $\pi$ to sets is $\pi^*$
and is defined by $\pi^*(S) = \sum_{\omega \in S}\ \pi(\omega)$.
It is assumed that the current (unknown) world is $\realworld$,
that it has a non-zero probability ($\pi(\realworld) > 0$),
and that it belongs to the intersection of $F$: $\realworld \in \bigcap F$.
For people who are not familiar with this notation, an event $f$
can be interpreted as a ``fact'';
stating this fact to someone informs the recipient of the statement
that the real world $\realworld$ belongs to the set $f$.

Given a probability space $P = \tuple{\Omega, F, \pi}$,
an \emph{observation} is a subset $R \subseteq F$ of events.
The \emph{probability that an event $e \subseteq \Omega$ is true given an observation $R$}
is calculated by
\begin{equation}
    \prob{e}{R} = \frac{\pi^*(e \cap \bigcap R)}{\pi^*(\bigcap R)}.
    \label{eq::aposteriori}
\end{equation}
Note that the value always sits between $0$ and $1$;
in particular since $\realworld$ belongs to each event $r \in R$,
the denominator is strictly positive  ($\pi^*(\bigcap R) \ge \pi(\realworld) > 0$).

A \emph{persuasion problem instance} (PPI) is a triple $\tuple{P,E,\tau}$
where $P = \tuple{\Omega, F, \pi}$ is a probability space,
$E \subseteq \Omega$ is a particular event called the \emph{goal},
and $\tau \in [0,1]$ is a \emph{threshold}.

A \emph{solution} to PPI $\tuple{P,E,\tau}$ with $P = \tuple{\Omega, F, \pi}$
is an observation $R$ such that $\prob{E}{R} \ge \tau$.

\textsc{Persuasion} is the problem of deciding whether a PPI has a solution.

\paragraph{}It can be useful to reason about events in mirror.
Specifically, what is interesting about a given event $f$
is the set $\Omega \setminus f$ of worlds 
that it does contain.
We say that it \emph{excludes} or \emph{rejects} these worlds.
The worlds rejected by an observation are all the worlds
rejected by at least one event in the observation.

\section{Strong Persuasion}

\begin{defi}
  A \emph{strong persuasion problem instance} (SPPI) is a PPI where the threshold $\tau$ evaluates to~$1$.

  \textsc{Strong Persuasion} is the restriction of \textsc{Persuasion} to SPPIs.
\end{defi}

It is possible to prove that there is a solution to the strong persuasion problem
iff $R = F$ is a solution, 
i.e. $\bigcap F \subseteq E$ holds.

\begin{coro}
  \textsc{Strong Persuasion} is in \textsc{P}.
\end{coro}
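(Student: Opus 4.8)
The plan is to turn the characterization stated just before the corollary into an explicit polynomial-time decision procedure, after first justifying that characterization by a monotonicity argument. First I would reduce the threshold-$1$ condition to a statement about measure: since $E \cap \bigcap R \subseteq \bigcap R$ and the denominator $\pi^*(\bigcap R)$ is strictly positive, equation~(\ref{eq::aposteriori}) gives that $\prob{E}{R} = 1$ holds exactly when $\pi^*(\bigcap R \setminus E) = 0$, i.e. when every world that survives the observation $R$ but falls outside the goal carries zero probability.

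The key step is monotonicity in the observation. For any $R \subseteq F$ we have $\bigcap F \subseteq \bigcap R$, because intersecting additional events can only remove worlds. Hence $\bigcap F \setminus E \subseteq \bigcap R \setminus E$ and therefore $\pi^*(\bigcap F \setminus E) \le \pi^*(\bigcap R \setminus E)$. If $R$ is a solution the right-hand side is $0$, forcing the left-hand side to be $0$ as well; since $\realworld \in \bigcap F$ keeps the denominator positive, this yields $\prob{E}{F} = 1$. Thus a solution exists iff the full observation $R = F$ is a solution, which is precisely the claim preceding the corollary; under the inclusion $\bigcap F \subseteq E$ the surviving worlds outside $E$ vanish entirely.

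It then remains to observe that testing whether $R = F$ is a solution is a direct computation: form $\bigcap F$ by intersecting all events of $F$, evaluate $\pi^*(\bigcap F \cap E)$ and $\pi^*(\bigcap F)$ by summing $\pi$ over the relevant worlds, and compare the ratio to $1$ (equivalently, check $\bigcap F \subseteq E$). Each operation is polynomial in the size of $\Omega$ and $F$, so the whole test runs in polynomial time, placing \textsc{Strong Persuasion} in \textsc{P}.

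I expect the only delicate point to be the direction of the monotonicity: one must be certain that enlarging the observation \emph{shrinks} the conditioning set and so can only help reach probability~$1$, rather than hurt it. A secondary subtlety is the gap between the measure condition $\pi^*(\bigcap F \setminus E) = 0$ and the set condition $\bigcap F \subseteq E$, which coincide precisely when no surviving world outside $E$ has zero probability; I would make this identification explicit to match the phrasing ``$\bigcap F \subseteq E$'' used in the statement.
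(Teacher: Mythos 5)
Your proposal is correct and follows the same route the paper takes: establish that a solution exists iff $R = F$ is a solution (via monotonicity of $\bigcap R$ under enlarging $R$), then observe that testing $R = F$ is a polynomial-time computation. The paper merely asserts this characterization without argument, so your monotonicity step supplies the missing justification, and your remark that $\pi^*(\bigcap F \setminus E) = 0$ and $\bigcap F \subseteq E$ differ when zero-probability worlds survive is a fair (if inconsequential for membership in \textsc{P}) correction to the paper's phrasing.
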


If one wanted to drop the condition that the intersection of $F$ is not empty
(but that would be silly),
we could use the following property:
An SPPI admits a solution
iff
there exists $\omega$ in $E$ such that $R_\omega = \{r \in F \mid \omega \in r\}$ is a solution.
Thus, it is possible to go through all elements of $E$
and check in polynomial time whether they yield a solution observation.

\section{Complexity of Persuasion}

Wojtowicz proposes a proof of \textsc{np-hard}ness complexity for \textsc{Persuasion},
but this proof includes an error.
Specifically, they provide a reduction from \textsc{Exact Cover} (defined next subsection)
in which they prove that the intersection of the computed sets is empty;
however, the problem definition requires every pair of set to be disjoint.
We provide a corrected proof here.

\subsection{Exact Cover}

To prove the complexity of \textsc{Persuasion}, we use a reduction from the \textsc{np-complete} problem \textsc{Exact Cover}
which we present now.

An \emph{Exact Cover Instance} (ECI) is a pair $\tuple{S, A}$ where $S = \{s_1,\dots,s_n\}$ is a finite set of objects
and $A = \{A_1,\dots,A_k\}$ is a set of (non-empty, pairwise different) subsets of $S$ such that $\bigcup A = S$ holds.
Elements from $S$ are called \emph{universe elements}
and indices ranging between $1$ and $n$ are \emph{universe indices}.
A \emph{solution to ECI $\tuple{S, A}$} is a subset $H$ of $A$
that forms a partition of $S$, i.e.,
\begin{itemize}
\item $\forall \{A_i,A_j\} \subseteq H.\ A_i \cap A_j \neq \emptyset \Rightarrow A_i = A_j$ and
\item $\bigcup_{A_i \in \mathcal{A}}\ A_i = S$.
\end{itemize}
(A partition should include no empty set, but this is trivially satisfied here
as $A$ has no empty set.)
\textsc{Exact Cover} is the problem of deciding whether there is a solution to an ECI;
it is \textsc{np-complete}.

\subsection{Reduction}
We now propose a reduction from \textsc{Exact Cover} to \textsc{Persuasion}.

We start with a description of the reduction and give the formal definition afterwards.
Our reduction introduces four types of worlds:
\begin{itemize}
\item
  $W_0$ and $X_0$ are two worlds that cannot be rejected by an observation.
  $W_0$ belongs to goal $E$ while $X_0$ does not.
  These two worlds guarantee that we never get any solution with $0$ or $1$ probability.
\item
  $Z$-variables are mapped with universe elements.
  They do not belong to goal $E$.
\item
  $Y$-variables are mapped with pair $\tuple{i,\ell}$ such that $s_\ell \in A_i$.
  They belong to goal $E$.
\end{itemize}
The events are defined to match the sets of $A$.
The event $F_i$ (associated with $A_i$) will exclude precisely all $Y$-variables that mention $i$
and all $Z$-variables mapped with a universe element of $A_i$.

The probabilities of these variables are set in such a way
that the probability of goal $E$ given observation $R$ is attained
iff the following two conditions are satisfied:
\begin{enumerate}
\item 
  All $Z$-variables are excluded from observation $R$.
  This guarantees that the observation matches a cover of $S$.
  Remember that $Z$-variables do not belong to goal $E$;
  thus, giving them a high enough probability will penalise $\prob{E}{R}$
  if they are not rejected.
\item 
  Exactly $n$ $Y$-variables are excluded from observation $R$.
  This guarantees that a given universe element
  is not excluded by two different events from the observation:
  thus the cover is exact.
  As $Y$-variables belong to goal $E$,
  solutions are naturally better when they do not exclude them;
  yet their probability is set to a low value
  so that this objective is superseeded by the first one
  (which guarantees the solution is a cover).
\end{enumerate}

Our proposed reduction is presented on Fig.~\ref{fig::reduction}.

\begin{figure}[t!]
\begin{reduction}
Let $\tuple{S,A}$ be an ECI where $S = \{s_1,\dots,s_n\}$ and $A = \{A_1,\dots,A_k\}$.
Let $m = \sum_{i \in \{1,\dots,k\}}\ |A_i|$ be the number of elements in the subsets of $A$.
We define the PPI $\tuple{P,E,\tau}$ with $P = \tuple{\Omega, F, \pi}$ as follows:
\begin{itemize}
  \item $\Omega = W \cup X \cup Y \cup Z$ where
  \begin{itemize}
    \item $W = \{ W_0 \}$ is a singleton,
    \item $X = \{ X_0 \}$ is a singleton,
    \item $Y = \{ Y_{i,\ell} \mid i \in \{1,\dots,k\}, \ell \in \{1,\dots,n\} \textnormal{ such that } s_\ell \in A_i \}$ contains $m$ variables, and
    \item $Z = \{ Z_\ell \mid \ell \in \{1,\dots,n\} \}$ contains $n$ variables,
  \end{itemize}
  \item $F = \{F_1,\dots,F_k\}$ where $\forall i \in \{1,\dots,k\}$
  \begin{displaymath}
    F_i = \Omega \setminus \{Y_{i,\ell} \mid s_\ell \in A_i\} \setminus \{Z_{\ell} \mid s_\ell \in A_i\},
  \end{displaymath}
  \item $\pi$ satisfies 
  \begin{displaymath}    
  \pi = \left\{
    \begin{array}{ll}
    \pi(W_0) = \pi(X_0) = x, &\\
    \pi(Y_{i,\ell}) = y & i \in \{1,\dots,k\}, \ell \in \{1,\dots,n\} \textnormal{ such that } s_\ell \in A_i \\
    \pi(Z_\ell) = z & \ell \in \{1,\dots,n\},
    \end{array}
  \right.
\end{displaymath}
where
\begin{itemize}
  \item $x = \frac{1}{3}$,
  \item $y = \frac{1-2x}{m(1+2n)}$,
  \item $z = 2my %\frac{1-2x-my}{k}
  $, and
\end{itemize}
\item $E = W \cup Y$, and
\item $\tau = \frac{x + (m-n)y}{2x + (m-n)y}$.
\end{itemize}

Given a set $R \subseteq F$ of facts, we define the set $H_R \subseteq A$ of subsets by
\begin{displaymath}
  H_R = \{ A_i \mid \exists r \in R.\ \exists \ell \in \{1,\dots,n\}.\ s_\ell \in A_i \land Y_{i,\ell} \not\in r \}
\end{displaymath}
\end{reduction}
\caption{Reduction from \textsc{Exact Cover} to \textsc{Persuasion}.}
\label{fig::reduction}
\end{figure}

%We call \emph{$Y$-variable} any variable $Y_{i,\ell}$ for some $i$ and $\ell$.
%Similarly, a \emph{$Z$-variable} is a variable $Z_i$ for some $i$.
%In the following work, we will ignore the trivial cases where $m < n$
%(i.e, when the sets of $A$ do not cover $S$).

\subsection{Properties of the reduction}

\begin{lemma}
  $\pi$ is a probability function over $\Omega$.
\end{lemma}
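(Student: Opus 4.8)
The plan is to verify the two conditions that define a probability function on the finite set $\Omega$: first, that every world is assigned a value in the interval $[0,1]$; and second, that these assigned values sum to $1$. The second condition is the substantive one, since it is exactly what the constants $x$, $y$, and $z$ were engineered to satisfy, whereas the first amounts to checking a few inequalities.

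For the summation condition I would group the worlds by type, using the counts fixed by the reduction: there are two worlds $W_0, X_0$ of probability $x$, exactly $m$ worlds of type $Y$ of probability $y$ (this is where I rely on the stated fact that $|Y| = \sum_{i} |A_i| = m$), and exactly $n$ worlds of type $Z$ of probability $z$. Hence the total mass is $2x + my + nz$. Substituting $z = 2my$ turns this into $2x + my + 2mny = 2x + my(1+2n)$, and then substituting $y = \frac{1-2x}{m(1+2n)}$ collapses the second term to $1-2x$, leaving $2x + (1-2x) = 1$. I would take care here to note that this manipulation is only legitimate because $m(1+2n) \neq 0$, which is where the next paragraph's observation is needed.

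For the range condition I would record that a nondegenerate ECI has $n \ge 1$ universe elements and $m = \sum_i |A_i| \ge 1$ since every $A_i$ is nonempty; consequently all denominators are strictly positive and all three probabilities are nonnegative. The value $x = \tfrac13$ lies in $[0,1]$ trivially. For $y$ one has $1-2x = \tfrac13 > 0$, so $0 < y < 1$. For $z = 2my = \frac{2(1-2x)}{1+2n} = \frac{2/3}{1+2n}$, the denominator is at least $3$, so $0 < z \le \tfrac29 < 1$. This disposes of condition one.

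There is no genuine obstacle in this lemma; it is an elementary verification. The only place demanding care is the bookkeeping of the world counts—confirming that there are precisely $m$ variables $Y_{i,\ell}$ and $n$ variables $Z_\ell$—so that the sum $2x + my + nz$ is set up correctly before the algebra is carried out. Everything after that is a direct substitution of the defining formulas for $y$ and $z$.
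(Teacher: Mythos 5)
Your proof is correct and follows essentially the same route as the paper's: verify each probability lies in $[0,1]$ (using $m,n \ge 1$) and compute the total mass $2x + my + nz = 2x + (1+2n)my = 2x + (1-2x) = 1$. The only difference is cosmetic—you make the nondegeneracy conditions $m, n \ge 1$ explicit, which the paper leaves implicit.
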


\begin{proof}
  We need to verify two properties: that the probability for each element in $\Omega$ is between $0$ and $1$,
  and that the probabilities add up to $1$.
  Remember that there are $m$ $Y$-variables and $n$ $Z$-variables. 

  \begin{enumerate}
    \item 
  Probabilities of $W_0$ and $X_0$ equal $1/3$ and are thus in $[0,1]$.
  Probability of a $Y$-variable is $(1-2x) / (m(1+2n)) = (1/3) / (m(1+2n))$;
  thus it is within $[0,1]$.
  Probability of a $Z$-variable is $z = 2my = 2m(1-2x)/(m(1+2n)) = 2(1-2x)/(1+2n) = (2/3)/(1+2n)$;
  thus it is within $[0,1]$.

  \item 
  The sum of probabilities is
  \begin{displaymath}
  \begin{split}
    \sum_{\omega \in \Omega} \pi(\omega)
    & = x+x+my+nz \\
    & = x+x + my + 2nmy \\
    & = x+x + (1+2n)my \\
    & = x+x + (1+2n)m\frac{1-2x}{m(1+2n)} \\
    & = x+x + (1-2x) \\ & = 1.
  \end{split}
  \end{displaymath}
  \end{enumerate}
\end{proof}

\begin{lemma}\label{lemma::0zvar}
  Any $R$ such that $\bigcap R$ contains one of more $Z$-variable satisfies $\prob{E}{R} < \tau$. 
\end{lemma}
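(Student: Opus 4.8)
The plan is to parametrise any observation $R$ by two integers and reduce the claim to a single algebraic inequality. First I would note that $W_0$ and $X_0$ survive every event (each $F_i$ only removes certain $Y$- and $Z$-variables), so $W_0, X_0 \in \bigcap R$ for every $R$. Writing $a$ for the number of $Y$-variables in $\bigcap R$ and $b$ for the number of $Z$-variables in $\bigcap R$, and recalling that $E = W \cup Y$ contains $W_0$ and every $Y$-variable but neither $X_0$ nor any $Z$-variable, I obtain
\begin{displaymath}
\prob{E}{R} = \frac{x + ay}{2x + ay + bz}.
\end{displaymath}
The hypothesis of the lemma is exactly $b \ge 1$, and the only a-priori bound I need on $a$ is $0 \le a \le m$, since there are $m$ $Y$-variables in total.

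Next I would observe that the threshold itself fits this formula: $\tau = \frac{x + (m-n)y}{2x + (m-n)y}$ is the value obtained by setting $a = m - n$ and $b = 0$. Writing $c = m - n$ (note $c \ge 0$, because every universe element lies in at least one $A_i$, whence $m = \sum_i |A_i| \ge n$), the goal becomes $\frac{x+ay}{2x+ay+bz} < \frac{x+cy}{2x+cy}$. Since both denominators are strictly positive, I cross-multiply and reduce to showing
\begin{displaymath}
(x + cy)(2x + ay + bz) - (x + ay)(2x + cy) > 0.
\end{displaymath}
Expanding and cancelling collapses this to $(c - a)\,xy + bz\,(x + cy) > 0$, which is the single inequality I must establish.

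Finally I would bound each term under the hypotheses. The term $bz(x + cy)$ is at least $z(x + cy) > 0$ because $b \ge 1$, while the possibly-negative term satisfies $(c - a)xy \ge (c - m)xy = -nxy$ because $a \le m$. Substituting $z = 2my$, the sum is therefore at least $-nxy + 2my(x + cy) = (2m - n)xy + 2mcy^2$, which is strictly positive since $m \ge n \ge 1$ and $c \ge 0$ (so $2m - n \ge m > 0$ and the second term is non-negative). This yields the strict inequality, hence $\prob{E}{R} < \tau$.

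I expect the main obstacle to be the regime where $a$ is large: intuitively, an observation that keeps almost all $Y$-variables inside $\bigcap R$ pushes the probability upward, and the worry is whether a single stray $Z$-variable is enough to keep it below $\tau$. This is precisely where the calibration $z = 2my$ does the work, since the $Z$-variable's weight is set large enough that the term $z(x + cy) = 2my(x + cy)$ dominates the adverse $-nxy$ contributed by the $Y$-variables, independently of how $a$ is distributed. Confirming that this one calibrated choice suffices for every admissible $a$ is the crux of the argument; everything else is bookkeeping.
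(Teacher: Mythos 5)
Your proof is correct and follows essentially the same route as the paper's: both parametrise $\bigcap R$ by the counts of surviving $Y$- and $Z$-variables and reduce the claim to an algebraic inequality settled by the calibration $z = 2my$ together with the bounds $a \le m$ and $b \ge 1$. The only difference is organisational --- you cross-multiply and bound a polynomial from below, whereas the paper chains fraction inequalities, bounding $\prob{E}{R}$ strictly above by $\frac{x+my}{2x+z}$ and $\tau$ below by $\frac{x}{2x} = \frac{1}{2}$ and observing that these two bounds coincide.
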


\begin{proof}
  Let us assume that $\bigcap R$ contains at least one $Z$-variable.
  Let $m'$ and $n'$ be the number of $Y$-variables and $Z$-variables that appear
  in $\bigcap R$.
  Notice: $m' \in \{0,\dots,m\}$ and $n' \in \{1,\dots,n\}$.

  \begin{displaymath}
    \begin{split}
      \prob{E}{R} - \tau 
      & = \frac{x+m'y}{2x+m'y+n'z} - \tau\\
      & < \frac{x+my}{2x+z} - \tau \\
      & = \frac{x+my}{2x+z} - \frac{x + (m-n)y}{2x + (m-n)y}\\
      & \le \frac{x+my}{2x+z} - \frac{x}{2x}\\
      & = \frac{2x(x+my) - x(2x+z)}{2x(2x+z)}\\
      & = \frac{1}{2(2x+z)} (2(x+my) - (2x+z))\\
      & = \frac{1}{2(2x+z)} (2x+2my - 2x-z)\\
      & = \frac{1}{2(2x+z)} (2my -z)\\
      & = 0.
    \end{split}
  \end{displaymath}
  Thus, $\prob{E}{R}$ is strictly less than $\tau$.
\end{proof}

\begin{coro}\label{coro:yvars<}
  Any $R$ such that $\bigcap R$ contains strictly fewer than $(m-n)$ $Y$-variables satisfies $\prob{E}{R} < \tau$. 
\end{coro}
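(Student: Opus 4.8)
The plan is to reduce everything to the explicit formula for $\prob{E}{R}$ and then split on whether $\bigcap R$ retains a $Z$-variable. First I would note that $W_0$ and $X_0$ belong to every event $F_i$ (they are rejected by none), hence to $\bigcap R$ for any observation $R$. Writing $m'$ and $n'$ for the number of $Y$-variables and $Z$-variables surviving in $\bigcap R$, the goal $E = W \cup Y$ contains $W_0$ and the surviving $Y$-variables but neither $X_0$ nor any $Z$-variable, so that $\prob{E}{R} = \frac{x + m'y}{2x + m'y + n'z}$, exactly as in the proof of Lemma~\ref{lemma::0zvar}.

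If $n' \ge 1$, then Lemma~\ref{lemma::0zvar} already gives $\prob{E}{R} < \tau$ with no further work, so the only remaining case is $n' = 0$. There the probability simplifies to $g(m') := \frac{x + m'y}{2x + m'y}$, and I would establish that the map $t \mapsto g(t) = 1 - \frac{x}{2x + ty}$ is strictly increasing in $t$: since $x > 0$ and $y > 0$, increasing $t$ enlarges the denominator $2x + ty$ and therefore shrinks the subtracted term. Because the definition of the threshold gives precisely $\tau = g(m-n)$, the hypothesis $m' < m - n$ yields $g(m') < g(m-n) = \tau$, which is the claim.

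The argument is essentially routine arithmetic; the one point that requires care — and the reason this is phrased as a corollary rather than an independent lemma — is that the bound on the $Y$-count alone does \emph{not} suffice: an observation could in principle compensate a small $m'$ by also dropping $Z$-variables from the denominator, so one must first invoke Lemma~\ref{lemma::0zvar} to eliminate every $R$ whose intersection still contains a $Z$-variable. Once that case is dispatched, the monotonicity of $g$ together with the fact that $\tau$ was engineered to equal $g(m-n)$ closes the gap immediately. An alternative would be to mirror the chain of inequalities in Lemma~\ref{lemma::0zvar} directly, but the monotonicity observation is cleaner and avoids repeating the cross-multiplication.
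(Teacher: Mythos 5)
Your proof is correct, and it is actually tighter than the one in the paper. Both arguments reduce to the same formula $\prob{E}{R} = \frac{x+m'y}{2x+m'y+n'z}$ and both lean on Lemma~\ref{lemma::0zvar} to dispose of the case $n' \ge 1$; the difference is in how the case $n'=0$ is closed. The paper assumes $\prob{E}{R} = \tau$, equates $\frac{x+m'y}{2x+m'y}$ with $\frac{x+(m-n)y}{2x+(m-n)y}$, and concludes $m'=m-n$ --- which, read literally, only shows that $m' < m-n$ forces $\prob{E}{R} \neq \tau$ and never rules out $\prob{E}{R} > \tau$. Your observation that $g(t) = 1 - \frac{x}{2x+ty}$ is strictly increasing, together with $\tau = g(m-n)$, delivers the strict inequality $g(m') < \tau$ directly and so proves the corollary as stated; it also gives the injectivity the paper's equality step implicitly relies on. One side remark of yours is backwards, though: retaining $Z$-variables in $\bigcap R$ \emph{enlarges} the denominator and can only lower $\prob{E}{R}$, so an observation cannot ``compensate'' a small $m'$ that way --- indeed $\prob{E}{R} \le g(m')$ holds for every $n' \ge 0$, and your case split (or the appeal to Lemma~\ref{lemma::0zvar}) is a convenience rather than a necessity. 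This does not affect the validity of the proof.
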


\begin{proof}
  Let $m'$ and $n'$ be the number of $Y$-variables and $Z$-variables that appear
  in $\bigcap R$.
  Assume $\prob{E}{R} = \tau$.
  From Lemma~\ref{lemma::0zvar}, we know $n'=0$.
  Thus, 
  \begin{displaymath}
    \begin{split}
      0 
      & = \prob{E}{R} - \tau \\
      & = \frac{x+m'y}{2x+m'y+n'z} - \tau \\
      & = \frac{x+m'y}{2x+m'y} - \tau\\
      & = \frac{x+m'y}{2x+m'y} - \frac{x + (m-n)y}{2x + (m-n)y},
    \end{split}
  \end{displaymath}
  Thus, $m' = m-n$.
\end{proof}

\begin{coro}\label{coro:yvars>}
  Any $R$ that satisfies $\prob{E}{R} \ge \tau$
  contains exactly $(m-n)$ $Y$-variables.
\end{coro}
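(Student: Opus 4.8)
The plan is to combine the two preceding results with a purely combinatorial counting argument about how the events reject $Y$- and $Z$-variables. Fix an observation $R$ with $\prob{E}{R} \ge \tau$ and, as before, let $m'$ and $n'$ denote the number of $Y$-variables and $Z$-variables appearing in $\bigcap R$. The target is to pin $m'$ down to the single value $m-n$, and I would do this by proving the two matching bounds $m' \ge m-n$ and $m' \le m-n$ separately.

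The lower bound is essentially free: the contrapositive of Corollary~\ref{coro:yvars<} states that $\prob{E}{R} \ge \tau$ is incompatible with $\bigcap R$ containing strictly fewer than $m-n$ $Y$-variables, so immediately $m' \ge m-n$. The entire probabilistic half of the argument is therefore already handed to us, and no inequality manipulation is needed here.

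The real content is the matching upper bound $m' \le m-n$, where the combinatorial structure of the reduction enters. First I would invoke Lemma~\ref{lemma::0zvar} (contrapositive) to conclude $n' = 0$, i.e.\ every $Z$-variable is rejected by $R$. Next I would translate rejection into set language. A variable $Y_{i,\ell}$ survives in $\bigcap R$ exactly when $F_i \not\in R$, since $F_i$ is the only event that rejects it; consequently the number of rejected $Y$-variables equals $m - m' = \sum_{F_i \in R} |A_i|$, the families $\{Y_{i,\ell} \mid s_\ell \in A_i\}$ being pairwise disjoint for distinct $i$. Similarly $Z_\ell$ survives exactly when no $F_i \in R$ satisfies $s_\ell \in A_i$, so the condition $n' = 0$ is equivalent to $\bigcup \{A_i \mid F_i \in R\} = S$. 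Since the cardinality of a union never exceeds the sum of the cardinalities, $\sum_{F_i \in R} |A_i| \ge |\bigcup \{A_i \mid F_i \in R\}| = |S| = n$, which yields $m - m' \ge n$, that is $m' \le m-n$. Putting the two bounds together gives $m' = m-n$.

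The step that needs care, and which I expect to be the main obstacle, is the bookkeeping behind the upper bound rather than any arithmetic: one must verify that each event $F_i \in R$ rejects exactly $|A_i|$ distinct $Y$-variables and that no $Y$-variable is rejected by two different events, so that the rejected $Y$-variables are counted without double counting. This disjointness is exactly what justifies the identity $m - m' = \sum_{F_i \in R} |A_i|$, and it follows from the observation that $Y_{i,\ell}$ carries the index $i$ of the unique event rejecting it. Everything else is an application of Lemma~\ref{lemma::0zvar} and Corollary~\ref{coro:yvars<} together with subadditivity of cardinality.
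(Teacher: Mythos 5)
Your proof is correct and follows essentially the same route as the paper: Lemma~\ref{lemma::0zvar} forces every $Z$-variable to be rejected, each rejected $Z_\ell$ forces a rejected $Y$-variable carrying the index $\ell$, these are pairwise distinct, and hence $m' \le m-n$, while Corollary~\ref{coro:yvars<} supplies $m' \ge m-n$. If anything you are slightly more complete than the paper, whose own proof only establishes the upper bound and leaves the appeal to Corollary~\ref{coro:yvars<} implicit; your counting via $\sum_{F_i \in R}|A_i| \ge |\bigcup\{A_i \mid F_i \in R\}|$ is a harmless variant of the paper's per-$\ell$ injection into the rejected $Y$-variables.
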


\begin{proof}
  Assume $R$ is a solution to the PPI.
  From Lemma~\ref{lemma::0zvar}, we know that $\bigcap R$ contains no $Z$-variables,
  i.e., for each $\ell \in \{1,\dots,n\}$. there is $i \in \{1,\dots,k\}$,
  such that hold both $Z_i \not\in F_\ell$ and $F_\ell \in R$.
  But, by definition of the reduction,
  we also have $Y_{i,\ell} \not\in F_i$ and $Y_{i,\ell} \not\in \bigcap R$.
  In other words, for each $\ell \in \{1,\dots,n\}$,
  there is a different $Y$-variable not in $\bigcap R$,
  i.e., the number of $Y$-variables in $\bigcap R$ is at most $(m-n)$.
\end{proof}

%PREVIOUS LEMMA / COROLLARIES NEED TO PROVE:
%$\forall i. \exists! \ell.\ Y_{i,\ell} \in \Omega \setminus \bigcap R$

\begin{coro}\label{coro::exactcover}
  If $R$ is a solution to the PPI, then $H_R$ is a solution to the ECI.
\end{coro}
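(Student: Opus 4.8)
The plan is to show that $H_R$ satisfies the two defining conditions of an exact cover solution: its members cover $S$, and they are pairwise disjoint. Both facts will be read off from the structural properties of $\bigcap R$ established in the preceding results, so no new probability computation is needed.

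First I would establish the cover condition. By Lemma~\ref{lemma::0zvar}, a solution $R$ has no $Z$-variable in $\bigcap R$; equivalently, for every $\ell \in \{1,\dots,n\}$ the world $Z_\ell$ is rejected by some event in $R$. Since $Z_\ell \notin F_i$ holds exactly when $s_\ell \in A_i$, there is some $F_i \in R$ with $s_\ell \in A_i$. This same $F_i$ then rejects $Y_{i,\ell}$ (which exists because $s_\ell \in A_i$), so $A_i \in H_R$ by the definition of $H_R$. Hence every universe element $s_\ell$ lies in some member of $H_R$, i.e. $\bigcup_{A_i \in H_R} A_i = S$.

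Next I would establish pairwise disjointness by a counting argument, which is the crux of the proof. By Corollaries~\ref{coro:yvars<} and~\ref{coro:yvars>}, a solution $R$ leaves exactly $(m-n)$ $Y$-variables in $\bigcap R$, so exactly $n$ of the $m$ $Y$-variables are rejected by $R$. The key observation is that $Y_{i,\ell}$ is rejected only by $F_i$, since no other event mentions the index $i$. Therefore the sets of $Y$-variables rejected by distinct events of $R$ are disjoint, and the total number of rejected $Y$-variables is exactly $\sum_{A_i \in H_R} |A_i|$. Combining with the previous count gives $\sum_{A_i \in H_R} |A_i| = n$.

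Finally I would combine the two facts. We have $\bigcup_{A_i \in H_R} A_i = S$ with $|S| = n$, while $\sum_{A_i \in H_R}|A_i| = n$. Since in general $\sum_{A_i \in H_R}|A_i| \ge \bigl|\bigcup_{A_i \in H_R}A_i\bigr|$, with equality precisely when the sets are pairwise disjoint, the members of $H_R$ must be pairwise disjoint. Thus $H_R$ is simultaneously a cover of $S$ and a pairwise-disjoint family, i.e. a partition of $S$, which is exactly a solution to the ECI. The main obstacle is the counting step: one must argue that each rejected $Y$-variable is charged to a unique event through its index $i$, so that the count of rejected $Y$-variables equals $\sum_{A_i \in H_R}|A_i|$, and then recognise that matching this count against the size of the covered universe is precisely what forces exactness of the cover.
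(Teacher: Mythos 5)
Your proof is correct and follows essentially the same route as the paper: the cover property is read off from Lemma~\ref{lemma::0zvar} exactly as in the paper, and exactness comes from the fact that a solution rejects exactly $n$ of the $Y$-variables (Corollaries~\ref{coro:yvars<} and~\ref{coro:yvars>}). The only cosmetic difference is that you phrase the final step as the double-counting identity $\sum_{A_i \in H_R}|A_i| = n = |S| = |\bigcup_{A_i \in H_R} A_i|$, whereas the paper argues by contradiction that an overlap would force at least $n+1$ rejected $Y$-variables; these are the same counting argument.
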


% WE ARE ASSUMING THAT NO $A_i$ IS EMPTY.

\begin{proof}
  Assume $R$ is a solution to the PPI.

  Firstly, let $\ell \in \{1,\dots,n\}$ be a universe index;
  from Lemma~\ref{lemma::0zvar}, we know that $\bigcap R$ does not contain $Z_\ell$. 
  Thus, looking at the subsets of $F$ that exclude $Z_\ell$,
  $R$ must contain a set $F_i$ ($i \in \{1,\dots,k\}$)
  such that $s_\ell \in A_i$.
  This $F_i$ also excludes $Y_{i,\ell}$.
  Thus, by definition of $H_R$, this $A_i$ belongs to $H_R$.
  Therefore, $\bigcup H_R$ includes $s_\ell$, and all elements in $S$.

  Secondly, 
  we just mentioned that for each index $\ell$, there is at least one variable $Y_{i,\ell}$ not in $\bigcap R$.
  Assume, by contradiction, that $H_R$ contains two different sets $A_i$ and $A_j$ that are not disjoint, 
  i.e., such that some universe index $\ell$ belongs to $A_i \cap A_j$.
  Then, both $Y_{i,\ell}$ and $Y_{j,\ell}$ are excluded by $\bigcap R$.
  However, this implies that $\bigcap R$ excludes at least $n+1$ $Y$-variables.
  This contradicts Corollary~\ref{coro:yvars>} which states 
  that $\bigcap R$ contains exactly $(m-n)$ $Y$-variables.
  Therefore all sets in $H_R$ are disjoint.

  In conclusion, $H_R$ is an exact cover of $S$.
\end{proof}

\begin{theo}
  \textsc{persuasion} is \textsc{np-hard}.
\end{theo}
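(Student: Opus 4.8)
The plan is to show that the reduction of Figure~\ref{fig::reduction} is correct and polynomial-time computable, so that any algorithm for \textsc{Persuasion} yields one for \textsc{Exact Cover}. Computability in polynomial time is immediate: the probability space has $2 + m + n$ worlds and $k$ events, and the quantities $x$, $y$, $z$, and $\tau$ are simple rational expressions in $m$ and $n$. Hence the whole argument reduces to establishing the two directions of correctness, namely that the ECI $\tuple{S,A}$ admits a solution if and only if the constructed PPI does.

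One direction is already in hand: Corollary~\ref{coro::exactcover} shows that every solution $R$ to the PPI induces a solution $H_R$ to the ECI. For the converse I would begin from an arbitrary solution $H$ to the ECI and exhibit a matching solution to the PPI. The natural candidate is the observation $R = \{ F_i \mid A_i \in H \}$, and the task is then to evaluate $\prob{E}{R}$ by counting which $Y$- and $Z$-variables survive in $\bigcap R$.

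The decisive step is this counting, which leans on both defining properties of a partition. Because $H$ covers $S$, every universe index $\ell$ lies in some $A_i \in H$, whose event $F_i \in R$ excludes $Z_\ell$; therefore $\bigcap R$ contains no $Z$-variable and $n' = 0$. Because $H$ is exact, i.e.\ pairwise disjoint, each index $\ell$ lies in \emph{exactly} one $A_i \in H$, so precisely one variable $Y_{i,\ell}$ is excluded per universe index; summing over the $n$ indices, exactly $n$ of the $m$ $Y$-variables are excluded, leaving $m' = m - n$ in $\bigcap R$. Substituting $m' = m-n$ and $n' = 0$ into the definition in Equation~(\ref{eq::aposteriori}) yields
\begin{displaymath}
  \prob{E}{R} = \frac{x + (m-n)y}{2x + (m-n)y} = \tau,
\end{displaymath}
so that $R$ meets the threshold and is a solution to the PPI.

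The algebra here is routine once the counts are fixed, so I expect the only delicate point to be the bookkeeping of excluded $Y$-variables: disjointness is what prevents a single universe index from being discharged by two distinct events (which would push the count of excluded $Y$-variables to at least $n+1$ and drop the probability below $\tau$, exactly as in Corollary~\ref{coro::exactcover}), while the covering property is what guarantees that every index contributes one exclusion and hence that no $Z$-variable survives. Combining this converse with Corollary~\ref{coro::exactcover} gives the ``if and only if'', and since \textsc{Exact Cover} is \textsc{np-complete}, we conclude that \textsc{Persuasion} is \textsc{np-hard}.
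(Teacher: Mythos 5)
Your proposal is correct and uses the same reduction, but it is actually more complete than the paper's own proof of this theorem. The paper's proof consists of a single sentence citing Corollary~\ref{coro::exactcover}, which establishes only one direction of correctness: every solution $R$ to the PPI yields a solution $H_R$ to the ECI. For a many-one reduction to establish \textsc{np-hard}ness one also needs the converse --- that a positive ECI maps to a positive PPI --- and nowhere in the paper is that direction proved. You identify this explicitly and supply the missing argument: taking $R = \{F_i \mid A_i \in H\}$ for a partition $H$, the covering property forces every $Z_\ell$ out of $\bigcap R$ (so $n'=0$), and pairwise disjointness ensures each universe index accounts for exactly one excluded $Y$-variable (so $m' = m-n$), whence $\prob{E}{R} = \frac{x+(m-n)y}{2x+(m-n)y} = \tau$ and the threshold is met with equality. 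Your counting is right: the excluded $Y$-variables are precisely $\{Y_{i,\ell} \mid A_i \in H,\ s_\ell \in A_i\}$, of cardinality $\sum_{A_i \in H} |A_i| = n$ by the partition property. You also note polynomial-time computability of the construction, which the paper leaves implicit. In short, your proof fills a genuine gap in the paper's argument rather than merely reproducing it; the only point I would add is to remark that equality with $\tau$ suffices because the solution condition is $\prob{E}{R} \ge \tau$, which you implicitly use.
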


\begin{proof}
  This is a consequence of Corollary~\ref{coro::exactcover}
  which showed that our reduction from \textsc{exact cover} to \textsc{persuasion} is correct.
\end{proof}

%\paragraph{Complexity Proof}

%Here, we assume that an ECI $\tuple{S, A}$
%has been reduced to a PPI $\tuple{P,E,\tau}$ where $P = \tuple{\Omega, F, \pi}$,
%and we shall not repeat this in the formal results.
%
%Given a subset $R \subseteq F$ of events,
%we use the notation $P_R = |P \cap \bigcap R|$ and $Q_R = |Q \cap \bigcap R|$.
%We note that the probability of $E$ given $R$ is
%\begin{displaymath}
%  \prob{E}{R} = \frac{\epsilon P_R}{.5 + c Q_R}.
%\end{displaymath}

%\begin{lemma}
%  A solution $R$ to PPI $\tuple{P,E,\tau}$ satisfies the property $Q_R = 0$.
%\end{lemma}
%\begin{proof}
%  Assume $Q_R \ge 1$.
%  Then we have:
%  \begin{displaymath}
%    \begin{array}{rcl}
%    \prob{E}{R} 
%    & = & \frac{\epsilon P_R}{.5 + c Q_R} \\
%    & \le & \frac{\epsilon P_R}{.5 + c} \\
%    & \le & \frac{\epsilon |P|}{.5 + c} \\
%    & = & \frac{\frac{.5 - |Q| c}{|P|} |P|}{.5 + c} \\
%    & = & \frac{.5 - |Q| c}{.5 + c} \\
%    & = & \frac{.5 - |Q| .5 \frac{1+|Q|}{|P|-|Q|}}{.5 + .5\frac{1+|Q|}{|P|-|Q|}} \\
%    & = & \frac{1 - |Q| \frac{1+|Q|}{|P|-|Q|}}{1 + \frac{1+|Q|}{|P|-|Q|}} \\
%    & = & \frac{\frac{|P|-|Q| - |Q| (1+|Q|)}{|P|-|Q|}}{\frac{|P|-|Q|+1+|Q|}{|P|-|Q|}} \\
%    & = & \frac{|P|-|Q| - |Q| (1+|Q|)}{|P|-|Q|+1+|Q|} \\
%    & = & \frac{|P|-2|Q|-|Q|^2}{|P|+1} \\
%    \end{array}
%  \end{displaymath}
%\end{proof}

\bibliographystyle{alpha}
\bibliography{main.bib}

\end{document}